\def\Return{\State\textbf{return}\ }
\theoremstyle{definition}
\newtheorem{definition}{Definition}
\numberwithin{definition}{section}
\newtheorem{example}{Example}[section]
\theoremstyle{plain}
\newtheorem{theorem}[definition]{Theorem}
\newtheorem{corollary}[definition]{Corollary}
\newtheorem{proposition}[definition]{Proposition}
\theoremstyle{remark}
\newtheorem{remark}{Remark}[section]
\def\QQ{\mathbb{Q}}
\def\NN{\mathbb{N}}
\def\dec{\mathcal{D}}
\def\field{\mathbb{K}}
\def\acfield{\overline{\field}}
\def\Var{\mathbf{V}}
\def\Idl{\mathbf{I}}
\def\aff{\mathbb{A}}
\newcommand{\idl}[1]{\langle #1 \rangle}
\newcommand{\sat}[2]{\left(#1:#2^{\infty}\right)}
\newcommand{\rad}[1]{\sqrt{#1}}
\newcommand{\set}[1]{\{#1\}}
\newcommand{\setdes}[2]{\left\{#1\;\middle|\;#2\right\}}
\newcommand{\zclo}[1]{\overline{#1}}
\newcommand{\zeq}{\stackrel{\mathrm{\footnotesize clo}}{=}}
\DeclareMathOperator{\codim}{codim}
\newcommand{\Ffour}{\textsc{\texorpdfstring{F\textsubscript{4}}{F4}}\xspace}
\newcommand{\algname}[1]{\textsc{#1}}
\newcommand{\satb}{\algname{sat}}
\DeclareMathOperator{\hll}{hull}
\DeclareMathOperator{\irred}{Irred}
\newcommand{\sftware}[1]{\texttt{#1}\xspace}
\def\AS{\sftware{AlgebraicSolving.jl}}
\def\julia{\sftware{Julia}}
\def\msolve{\sftware{msolve}}
\def\maple{\sftware{maple}}
\def\elim{\textdagger\xspace}
\def\homolog{\textdaggerdbl\xspace}
\def\regchain{\textdollar\xspace}
\def\primedec{\textsterling\xspace}
\title{A Syzygial Method for Equidimensional Decomposition}
\author{Rafael Mohr \\{\small RPTU Kaiserslautern-Landau, Germany and
    Sorbonne Université, LIP6, CNRS, Paris, France}} \date{}
\begin{document}

\maketitle

\begin{abstract}
  Based on a theorem by Vasconcelos, we give an algorithm for
  equidimensional decomposition of algebraic sets using syzygy
  computations via Gröbner bases. This algorithm avoids the use of
  elimination, homological algebra and processing the input equations
  one-by-one present in previous algorithms. We experimentally
  demonstrate the practical interest of our algorithm compared to the
  state of the art.
\end{abstract}

\section{Introduction}
\label{sec:int}

\paragraph*{Problem Statement \& Motivation}

Let $\field$ be a field with algebraic closure $\acfield$ and let $R$
be a polynomial ring over $\field$ in $n$ variables. Recall that an
algebraic set (for us the set of common zeros of a set of polynomials
in $R$) in the affine space $\aff^n$ over $\acfield$ may be written
uniquely, up to reordering, as a finite union of $\field$-irreducible
algebraic sets, which are those defined by prime ideals in $R$. We say
that an algebraic set is {\em equidimensional} if all of its
$\field$-irreducible components have the same dimension.

Given a finite sequence $F\subset R$ denote by
$\Var(F)\subset \aff^n$ the corresponding algebraic set. Our goal is to
give an algorithm which produces finite sets of polynomials
$G_1,\dots,G_s$ s.t. each of the algebraic sets $Y_i:=\Var(G_i)$ is
equidimensional and s.t.
\[\Var(F) = \bigcup_{i=1}^s Y_i.\]

Algorithms for equidimensional decomposition frequently find
application in real algebraic geometry
\parencite[e.g.][]{safeyeldin2004, prebet2024} where certain critical
point computations require equidimensionality on their input. Somewhat
related, we also mention our previous work \textcite{helmer2024a},
where Whitney stratifications are computed for equidimensional
input. Other applications include automated theorem proving in
geometry \parencite[e.g.][]{wu2007, yang2001, chen2013} and the design
of vision-based controllers in robotics
\parencite[e.g.][]{garciafontan2022, pascual-escudero2021}.

\paragraph*{State of the Art}

In the world of symbolic computation, a vast collection of algorithms
exists to compute equidimensional decompositions, distinguished by the
data structure they use to encode algebraic sets.

The first family of algorithms is based on a data structure called a
{\em geometric resolution}. Such algorithms have been developed in
\textcite{lecerf2000, giusti2001, lecerf2003}. Roughly speaking, a
geometric resolution describes an algebraic set $X\subset \aff^n$ as
follows: Recall that if $X$ is zero-dimensional then it may be
parametrized over the zeros of a single univariate polynomial, in this
case a geometric resolution is precisely this parametrization. If now
$d:=\dim(X) > 0$ and $X$ is equidimensional, we choose a subspace
$\aff^d\subset \aff^n$ with $d = \dim(X)$ w.r.t. to which $X$ is in
Noether position, i.e. the projection from $X$ to $\aff^d$ has
everywhere zero-dimensional fiber. We may then describe the generic
fiber of this projection with a geometric resolution as in the
zero-dimensional case with base field the function field corresponding
to $\acfield^d$ instead of $\field$. \textcite{schost2003} describes
an algorithm which avoids the Noether position assumption (i.e. the
subspace $\aff^d$ is fixed) but requires the input to be given
by $n-d$ equations.

Algorithms based on geometric resolutions utilize so-called {\em
  straight line programs} for fast, division-free evaluation of
polynomials and a formal Newton iteration algorithm, which relies on
assuring certain non-degeneracy assumptions. These algorithms obtain
the best known complexity bounds for equidimensional decomposition,
polynomial in a quantity derived from the degree of the algebraic set
cut out by the input system. As far as we know, there exist no
implementations of algorithms utilizing geometric resolutions to
compute equidimensional decompositions in currently used computer
algebra systems.

We also mention \textcite{jeronimo2002} for an approach using similar
techniques.

Another family of algorithms uses {\em triangular sets}. The basic
idea of triangular sets is that each equidimensional algebraic set of
codimension $c$ should be cut out, at a generic point, by exactly $c$
equations. If we enforce these $c$ equations to have in addition a
triangular structure w.r.t. the variables of the underlying polynomial
ring we obtain a triangular set. Each polynomial in a given triangular
set has then attached to it a {\em leading variable}. A triangular set
again describes an associated algebraic set by describing its
(zero-dimensional) generic fiber under the projection to the subspace
corresponding to those variables which do not appear as a leading
variable. This generic fiber is cut out by the constituent polynomials
of the triangular set. Note that not every equidimensional algebraic
set can be described by such a triangular set (such algebraic sets are
frequently called {\em characterizable} or {\em equiprojectable}), but
every algebraic set may be written as a union of characterizable
algebraic sets.

Such triangular sets have their origin in so-called Wu-Ritt
characteristic sets \parencite[see e.g.][]{ritt1950a, wu1986,
  chou1990, wang1993, Gallo1991} which are prominently used in
differential algebra.

Algorithmically, the prescribed triangular structure yields suitable
generalizations of algorithms for univariate polynomials to work with
$X$, in particular Euclid's algorithm, through the so-called
D5-principle \parencite{delladora1985}.

Of particular importance, especially in the realm of equidimensional
decomposition, are certain special triangular sets called
\textit{regular chains}, introduced by \textcite{kalkbrener1993,
  lu1994} in which algorithms are given to decompose a given algebraic
set into ones described by regular chain. A regular chain is a
triangular set $T$ where in addition each solution to the first $k$
equations of $T$ (sorted by leading variable) can be lifted to
solutions of the first $k+1$ equations. Another decomposition
algorithm using regular chains was given by \textcite{lazard1991}.

Algorithms using regular chains are prominently part of the computer
algebra system \maple \parencite{lemaire2005, chen2007a, chen2012}. We
further refer to \textcite{wang2001, hubert2003} for comprehensive
introductions to the subject.

Our algorithm is based on Gröbner bases, invented under this name by
\textcite{buchberger1965, buchberger2006}. Instead of encoding an
algebraic set $X$ on a Zariski-open subset of itself, where specific
geometric circumstances may be assumed, as is the case for geometric
resolutions and triangular sets, Gröbner bases work by effectively
describing a polynomial ideal $I$ with $\Var(I) = X$. For any such
polynomial ideal, a Gröbner basis consists of a generating set for $I$
which provides an effective ideal membership test (i.e. given $f\in R$
we may test whether $f\in I$). A Gröbner basis of $I$ makes a large
number of algebro-geometric invariants and operations for $I$ or $X$
computable (we may compute for example the dimension or degree of $I$,
or intersect $I$ with another ideal). Gröbner bases exist for any
polynomial ideal or, in fact, any submodule of a free module over
$R$. This flexibility makes them an important cornerstone of all
popular computer algebra systems. Gröbner bases depend in addition on
the choice of a {\em monomial order} (which is a certain total order
on the set of all monomials of $R$) and their properties depend, at
least partially, on the choice of this monomial order.

A number of algorithms for equidimensional decomposition based on
Gröbner bases exist. Most of them are based on elimination techniques
which they use to compute with generic fibers of projections
\parencite[see e.g.][]{decker1999, gianni1988} or they use the fact
that a suitable projection of an equidimensional algebraic set is a
hypersurface, i.e. given by a single polynomial \parencite[see
e.g.][]{caboara1997}. Another widely implemented algorithm was given
by \textcite{eisenbud1992}. They give a homological criterion for
equidimensionality which is then turned into an algorithm via the
computation of free resolutions of polynomial ideals using Gröbner
bases. Finally, with particular relation to this work, we have the
algorithms by \textcite{moroz2008} and our previous work
\parencite{eder2023, eder2023a} which work essentially by detecting
whether a hypersurface intersects a given equidimensional algebraic
set $X$ regularly (meaning this intersection has codimension exactly
one less than $X$) in order to avoid potentially costly elimination
operations or the computation of full free resolutions.

\paragraph*{Contributions}

A number of the algorithms presented in the last section proceed by
processing the defining equations of the algebraic set to be
decomposed one-by-one: Given input $F:=(f_1,\dots,f_r)\subset R$, one starts
by intersecting the equidimensional algebraic set $\Var(f_1)$ with
$\Var(f_2)$ and decomposing the result into equidimensional algebraic
sets. The resulting sets are then intersected with $\Var(f_3)$ and
decomposed again. Proceeding like this until all polynomials in $F$
have been processed, one obtains an equidimensional decomposition of
$\Var(F)$. Such a strategy is employed, for example, by all algorithms
using geometric resolutions mentioned in the last section, by Lazard's
algorithm for triangular sets \parencite{lazard1991}, or by the
algorithm of \textcite{moroz2008} and our previous work
\parencite{eder2023a} in the world of Gröbner bases.

Let us denote for an algebraic set $X$ its set of $\field$-irreducible
components by $\irred(X)$. A frequent issue with such an incremental
strategy is that it tends to produce superfluous irreducible components
in the sense that, after having decomposed $X=\bigcup_iY_i$, the inclusion
\begin{equation}
  \label{eq:irreddec}
  \irred(X)\subseteq \bigcup_i\irred(Y_i)
\end{equation}
is usually strict. Because of the incremental method employed, if
$X = \Var(f_1,\dots,f_k)$ for $k< r$, these redundant components can
then cause additional work in the next step, when the task is to
decompose $Y_i\cap \Var(f_{k+1})$ for each $i$.

When using Gröbner bases with such an incremental strategy, another
problem arises: We observed that when using our own incremental
algorithm \parencite{eder2023a}, that the associated Gröbner basis
computations tend to be hardest ``in the middle'', i.e. when about
half of the equations have been processed, in particular when this
algorithm does not produce any decomposition after having processed
the first few equations. This behaviour can be somewhat explained by
complexity statements for Gröbner basis computations under some
regularity assumptions: Indeed, for a so-called {\em regular sequence
  in strong Noether position}, the cost of computing a Gröbner basis
equation-by-equation by echelonizing Macaulay matrices is higher than
in the final steps \parencite{bardet2015}. Dimension dependent
complexity bounds provide another confirmation of this behaviour, see
\textcite{hashemi2017}.

These two problems motivate the design of further {\em
  non-incremental} algorithm for equidimensional decomposition using
Gröbner bases. Such algorithms do exist of course, the
elimination-based algorithms mentioned previously and the algorithm by
\textcite{eisenbud1992} are non-incremental. The former requires the
computation of Gröbner bases for elimination monomial orders, which
tends to be harder than computing Gröbner bases for the
degree-reverse-lexicographical (DRL) monomial order \parencite[see
e.g.][for a theoretical investigation of this
phenomenon]{lazard1983}. The latter algorithm involves heavy
computational machinery, in particular the computation of free
resolutions and kernels between modules over $R$, and it has been
observed by \textcite{decker1999} that this algorithm tends to be
slower than elimination-based techniques on larger examples.

This work gives another non-incremental algorithm based on a
non-incremental characterization of local regular sequences by
\textcite{vasconcelos1967}. To decompose $\Var(F)$, we only use
partial information about the syzygies (i.e. polynomial relations) of
$F$, thus avoiding the use of elimination and the computation of full
free resolutions while also not being forced to iterate over the
equations in $F$ one-by-one as for our previous algorithms. The
decomposition produced by our algorithm in addition is completely
irredundant, i.e. the inclusion in \Cref{eq:irreddec} turns into an
equality for our algorithm. Experimentally, we show the practical
efficiency of our algorithm in comparison with our previous algorithm
\parencite{eder2023a}, an algorithm for decomposition based on regular
chains implemented in \maple, an elimination-based algorithm using
Gröbner bases and the algorithm by \textcite{eisenbud1992}.

\paragraph*{Outline}

In \Cref{sec:algdesc} we give the necessary algebraic preliminaries,
in particular the underlying theorem by \textcite{vasconcelos1967}.
Based on this, we then describe the main ideas of our algorithm.
Then, in \Cref{sec:datstruc}, we show how to turn these ideas into a
concrete algorithm using Gröbner bases and prove the correctness and
termination of this algorithm. We do not define or introduce Gröbner
bases as their inner workings are inessential to the understanding of
the geometric ideas underlying our algorithm and how we perform the
necessary operations with them is part of the literature. Finally, in
\Cref{sec:bench}, we describe our implementation of our
algorithm. This implementation is based on so-called {\em
  signature-based Gröbner basis algorithms} \parencite[see
e.g.][]{eder2017} to facilitate the necessary syzygy computations
\parencite[following e.g.][]{sun2011} and uses the Gröbner basis library
\msolve \parencite{berthomieu2021} based on the \Ffour algorithm
\parencite{faugere1999} to perform all remaining tasks. We give some
benchmarks for this implementation, comparing it to several other
decomposition algorithms.

\paragraph*{Acknowledgements}

This work has been supported by the DFG Sonderforschungsbereich TRR
195 and the Forschungsinitiative Rheinland-Pfalz. The author thanks
Christian Eder, Pierre Lairez and Mohab Safey El Din for valuable
discussions and Benjamin Briggs for elucidation on Vasconcelos'
theorem.

\section{Algebraic Description \& Preliminaries}
\label{sec:algdesc}

In this section, we give the necessary preliminaries from commutative
algebra and algebraic geometry to then describe our algorithms in
purely algebraic terms without concern for particular data structures.

We start by introducing our notation. Let $R$ be a polynomial ring in
$n$ variables over a field $\field$. Denote by $\aff^n$ the affine
space over an algebraic closure of $\field$. For a set $F\subset R$ we
denote by $\Var(F)$ the algebraic set defined by $F$ and for a set
$X\subset \aff^n$ we denote by $\Idl(X)$ the ideal of polynomials vanishing
on $X$. If $X$ is algebraic, we denote by $\irred(X)$ the set of
$\field$-irreducible components of $X$.

For a prime ideal $P\subset R$ we denote by $R_P$ the localization of
$R$ at the multiplicatively closed subset $R\setminus P$ and for an element
$h\in R$ we denote by $R_h$ the localization of $R$ at the
multiplicatively closed set $\setdes{h^k}{k\in \NN}$, see Section 2.1 in
\textcite{eisenbud1995} for a definition of localizations.

Further recall

\begin{definition}[Locally Closed Set]
  A {\em locally closed set} in $\aff^n$ is the set difference
  of two algebraic sets.
\end{definition}

Ideal-theoretically, locally closed sets are described by {\em
  saturation ideals}:

\begin{definition}[Saturation]
  Given two ideals $I,J\subset R$, the {\em saturation ideal} of $I$ by
  $J$ is defined by
  \[\sat{I}{J}:=\setdes{f\in R}{fJ^k\subset I\text{ for some }k\in \NN}.\]
  If $J = \idl{f}$ is generated by a single element $f\in R$ we simply
  write $\sat{I}{f}$ instead of $\sat{I}{\idl{f}}$.
\end{definition}

Now recall

\begin{proposition}[see e.g. Chapter 4 in \cite{cox2015a}]
  \label{prop:loccloideal}
  For two ideals $I,J\subset R$ we have
  \[\zclo{\Var(I)\setminus \Var(J)} = \Var(\sat{I}{J}).\]
  In particular, the irreducible components of $\Var(\sat{I}{J})$ are
  those irreducible components of $\Var(I)$ not contained in
  $\Var(J)$.
\end{proposition}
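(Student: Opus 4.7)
The plan is to prove the equality $\overline{\Var(I)\setminus \Var(J)}=\Var(\sat{I}{J})$ by establishing both inclusions; the statement about irreducible components will drop out as an immediate corollary. For the inclusion $\overline{\Var(I)\setminus \Var(J)}\subseteq \Var(\sat{I}{J})$, since the right-hand side is closed it suffices to show $\Var(I)\setminus \Var(J)\subseteq \Var(\sat{I}{J})$. Given $x$ in the left-hand side, pick $g\in J$ with $g(x)\neq 0$ and any $f\in \sat{I}{J}$, so that $fJ^k\subseteq I$ for some $k\in \NN$. Then $fg^k\in I$, so $f(x)g(x)^k=0$ and therefore $f(x)=0$.

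For the reverse inclusion I would argue via the irreducible decomposition. Since $I\subseteq \sat{I}{J}$, we have $\Var(\sat{I}{J})\subseteq \Var(I)$; write $\Var(I)=X_1\cup\cdots\cup X_s$ as its decomposition into $\field$-irreducible components. For each $X_i$ with $X_i\not\subseteq \Var(J)$, the set $X_i\setminus \Var(J)$ is a non-empty open subset of the irreducible $X_i$, hence dense in $X_i$; these points lie in $\Var(I)\setminus \Var(J)\subseteq \Var(\sat{I}{J})$, and since $\Var(\sat{I}{J})$ is closed we obtain $X_i\subseteq \Var(\sat{I}{J})$. Combining over all such $i$ gives $\bigcup_{X_i\not\subseteq \Var(J)}X_i\subseteq \Var(\sat{I}{J})$. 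To finish, I would show the reverse containment by appealing to the classical saturation identity: if $I=Q_1\cap\cdots\cap Q_t$ is an irredundant primary decomposition with $\sqrt{Q_i}=P_i$, then $\sat{I}{J}=\bigcap_{J\not\subseteq P_i}Q_i$; taking radicals and keeping only the minimal primes shows that $\sqrt{\sat{I}{J}}$ is the intersection of exactly those minimal primes of $\sqrt{I}$ not containing $J$, whose common vanishing locus is precisely $\bigcup_{X_i\not\subseteq \Var(J)}X_i$. The same dense-open-subset argument also identifies this union with $\overline{\Var(I)\setminus \Var(J)}$, so both the main equality and the description of its irreducible components are established simultaneously.

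The main obstacle is the primary-decomposition identity $\sat{I}{J}=\bigcap_{J\not\subseteq P_i}Q_i$, which is a standard but not entirely trivial piece of commutative algebra that needs to be either cited or re-derived from the definition of associated primes. A secondary subtlety is that $\field$ is not assumed algebraically closed while $\aff^n$ is taken over $\overline{\field}$, so identifying the minimal primes of $\sqrt{I}$ in $R$ with the $\field$-irreducible components of $\Var(I)\subseteq \aff^n$ requires the usual Nullstellensatz-dictionary; this is, however, already implicit in the paper's conventions and can be invoked without comment.
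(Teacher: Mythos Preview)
The paper does not give a proof of this proposition; it merely records the statement with a reference to Chapter~4 of Cox--Little--O'Shea, so there is nothing to compare against. Your argument is correct and is essentially the standard one: one inclusion follows directly from the definition of saturation, and the identification of both sides with $\bigcup_{X_i\not\subseteq \Var(J)} X_i$ via minimal primes (or, equivalently, the primary-decomposition identity $\sat{I}{J}=\bigcap_{J\not\subseteq P_i}Q_i$) handles the rest. The only cosmetic remark is that once you have shown both $\overline{\Var(I)\setminus\Var(J)}$ and $\Var(\sat{I}{J})$ equal to that same union of components, the separately argued first inclusion becomes redundant; trimming it would streamline the write-up, but nothing is wrong as it stands.
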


Given a finite sequence $F:=(f_1,\dots,f_r)\subset R$ our goal will be 
to algorithmically decompose the algebraic set $\Var(F)$ into
{\em equidimensional} locally closed sets:

\begin{definition}[Equidimensionality]
  A locally closed set $Y\subset \aff^n$ is called {\em equidimensional}
  if all irreducible components of the Zariski closure $\zclo{Y}$
  of $Y$ have the same dimension.
\end{definition}

To do that we will use the concept of {\em local regular sequences}:

\begin{definition}[Local Regular Sequence]
  A finite sequence $F:=(f_1,\dots,f_r)\subset R$ is called a {\em local
    regular sequence} (outside of $\Var(h)$ for some $h\in R$) if for
  every prime ideal $P\subset R$ (with $h\notin P$), $F$ defines a regular
  sequence in the ring $R_P$, i.e. for each $1< i \leq r$, $f_i$ is not a
  zero divisor in the ring $R_P/\idl{f_1,\dots,f_{i-1}}R_P$.
\end{definition}

Local regular sequences give equidimensional locally closed sets:

\begin{proposition}
  Suppose $F:=(f_1,\dots,f_r)\subset R$ defines a local regular sequence outside
  of $\Var(h)$ for some $h\in R$. Then $\Var(F)\setminus \Var(h)$ is equidimensional
  of codimension $r$.
\end{proposition}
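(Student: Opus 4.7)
The plan is to reduce the claim to a statement about the heights of the minimal primes of $\idl{F}$ that avoid $h$, and then to combine Krull's height theorem with the definition of a regular sequence in $R_P$. First, by \Cref{prop:loccloideal} applied with $I = \idl{F}$ and $J = \idl{h}$, the irreducible components of $\zclo{\Var(F)\setminus \Var(h)}$ are precisely the irreducible components of $\Var(F)$ not contained in $\Var(h)$. Equivalently, they correspond to the minimal primes $P$ of $\idl{F}$ with $h\notin P$. Hence it suffices to prove that every such $P$ satisfies $\operatorname{height}(P)=r$, as then the corresponding component $\Var(P)\subset \aff^n$ will have codimension $r$.

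Fix such a prime $P$. For the upper bound, I would apply Krull's height theorem to $P$, which is minimal over an ideal generated by the $r$ elements $f_1,\dots,f_r$, to conclude $\operatorname{height}(P)\leq r$. For the lower bound, I would pass to the localization $R_P$, in which by hypothesis $F$ forms a regular sequence. A short induction on $i$ shows $\operatorname{height}(\idl{f_1,\dots,f_i}R_P)\geq i$: at each stage $f_i$ is a non-zerodivisor modulo $\idl{f_1,\dots,f_{i-1}}R_P$, so it avoids every prime minimal over that ideal, which forces the height to strictly increase (combining prime avoidance with Krull's principal ideal theorem). Since $\idl{F}R_P\subseteq PR_P$ and heights are preserved under localization at $P$, this yields $\operatorname{height}(P)\geq r$.

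Combining the two bounds gives $\operatorname{height}(P)=r$, so $\codim(\Var(P))=r$, and $\Var(F)\setminus \Var(h)$ is equidimensional of codimension $r$ as claimed. The only nontrivial technical step is the inductive height lower bound for an ideal generated by a regular sequence, and I do not expect any serious obstacle beyond invoking these standard commutative algebra facts. A slicker packaging of the same argument is to note that $R_P$, as a localization of a polynomial ring at a prime, is a regular, hence Cohen--Macaulay, local ring; in a Cohen--Macaulay ring any ideal generated by a regular sequence of length $r$ is automatically unmixed of height exactly $r$, which immediately yields $\operatorname{height}(PR_P)=r$.
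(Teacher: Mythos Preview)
Your proof is correct and aligns with the paper's own argument, which simply cites that $R_h$ is Cohen--Macaulay and defers to Matsumura's Theorems~17.3 and~17.6; your closing remark invoking the Cohen--Macaulay property of $R_P$ is exactly this argument. Your primary route via Krull's height theorem and the inductive height bound for regular sequences is a more explicit unpacking of the same standard facts, so there is no substantive difference in approach.
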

\begin{proof}
  This is standard and follows e.g. from the fact that $R_h$ is a
  Cohen-Macaulay ring, see in particular Theorem 17.3 and 17.6 in
  \textcite{matsumura1987}.
\end{proof}

We will want to detect when a given finite sequence $F\subset R$ forms a
local regular sequence. To this end we use the following Theorem by
\textcite{vasconcelos1967}, which forms the core ingredient of our
algorithm:

\begin{theorem}
  \label{thm:conorm}
  Let $f_1,\dots,f_r\in R$, $I:= \idl{f_1,\dots,f_r}$ and $h\in R$. Then
  $f_1,\dots,f_r$ is a local regular sequence outside of $\Var(h)$ if
  and only if $(I/I^2)_h$ is a free $(R/I)_h$-module, with basis given
  by the images of $f_1,\dots,f_r$ in $I/I^2$.
\end{theorem}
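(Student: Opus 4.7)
The plan is to reduce to a local statement at primes of $R$ and then invoke Vasconcelos's rigidity theorem from commutative algebra. Both sides of the claimed equivalence are local conditions that can be checked at each prime $P \subset R$ with $h \notin P$: freeness of the finitely generated module $(I/I^2)_h$ over $(R/I)_h$ with the prescribed basis can be tested stalkwise, and the local regular sequence property is defined this way. So it suffices to prove, for $A := R_P$ a Noetherian local ring with maximal ideal $\mathfrak m$ and $I \subseteq \mathfrak m$ generated by $f_1,\ldots,f_r$, that $(f_1,\ldots,f_r)$ is a regular sequence in $A$ if and only if the canonical surjection $\phi\colon (A/I)^r \twoheadrightarrow I/I^2$, $e_i \mapsto \bar f_i$, is an isomorphism. (The case $I \not\subseteq P$ is vacuous on both sides, since then $I R_P = R_P$.)

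The forward implication is classical: when $(f_1,\ldots,f_r)$ is a regular sequence in $A$, the Koszul complex $K_\bullet(f_1,\ldots,f_r;A)$ is a free resolution of $A/I$, and tensoring the short exact sequence $0 \to I \to A \to A/I \to 0$ with $A/I$ yields $I/I^2 \cong \operatorname{Tor}_1^A(A/I, A/I) \cong (A/I)^r$ with the natural identification sending $e_i$ to $\bar f_i$, exhibiting $\phi$ as an isomorphism.

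The main obstacle is the reverse implication. Injectivity of $\phi$ is strictly weaker than vanishing of the first Koszul homology $H_1 := H_1(f_1,\ldots,f_r;A)$: tracing through the Koszul differentials, one only obtains a surjection $H_1/I H_1 \twoheadrightarrow \ker\phi$, and Nakayama's lemma alone does not close the gap. The substantive ingredient that does close it is the finite projective dimension of $A/I$, which is automatic here because $A = R_P$ is a regular local ring (being a localization of a polynomial ring over a field). Under this hypothesis, Vasconcelos's theorem (see e.g.\ Bruns--Herzog, Theorem 1.4.16, or the original paper) shows that $\phi$ being an isomorphism forces $(f_1,\ldots,f_r)$ to be a regular sequence; the argument proceeds by induction on the projective dimension of $A/I$, using the minimal free resolution to peel off a regular element and then iterate. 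This is the essential content of the theorem, and I would cite it rather than reprove it.
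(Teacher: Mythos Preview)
Your proposal is correct and aligns with the paper's treatment: the paper does not prove this statement at all but simply attributes it to \textcite{vasconcelos1967}. Your write-up is in fact more detailed than the paper---you supply the reduction to local rings, the easy Koszul argument for the forward direction, and correctly isolate the substantive point (finite projective dimension over the regular local ring $R_P$) before invoking Vasconcelos's result for the converse---so there is nothing to correct.
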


Made somewhat more computationally explicit, this theorem yields:

\begin{corollary}
  \label{prop:syz}
  Let $f_1,\dots,f_r\in R$ and let $h\in R$. Suppose that for every
  $g\in R$ with \[gf_i\in \idl{f_1,\dots,f_{i-1},f_{i+1},\dots,f_r}\] for
  some $i$ we have $g\in \sat{I}{h}$. Then $(I/I^2)_h$ is free of rank
  $r$ over $(R/I)_h$, freely generated by the images of
  $f_1,\dots,f_r$.
\end{corollary}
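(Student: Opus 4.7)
My plan is to reduce to Vasconcelos's theorem (\Cref{thm:conorm}) and then verify its hypothesis, namely that $(I/I^2)_h$ is free as an $(R/I)_h$-module with basis given by the images of $f_1,\dots,f_r$, through a direct syzygy manipulation.

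First, I would set up the natural surjection $\phi\colon (R/I)_h^r \twoheadrightarrow (I/I^2)_h$ sending the standard basis vector $e_i$ to the image $\bar f_i$. Surjectivity is automatic since the $f_i$ generate $I$; the entire content of the statement lies in the injectivity of $\phi$.

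Next, given $(g_1,\dots,g_r)\in \ker\phi$, I would clear denominators so that the $g_i$ may be taken in $R$ with $h^N\sum_i g_if_i\in I^2$ for some $N\geq 0$. Expanding the membership using the generators $f_jf_k$ of $I^2$ and grouping terms by the factor $f_i$, one writes
\[
h^N\sum_i g_if_i = \sum_i\Bigl(\sum_j c_{ij}f_j\Bigr)f_i
\]
for suitable $c_{ij}\in R$. Setting $g'_i := h^Ng_i - \sum_j c_{ij}f_j$ then produces a genuine syzygy $\sum_i g'_if_i = 0$ in $R$. This is exactly the input the hypothesis requires, since for each $i$
\[
g'_if_i = -\sum_{k\neq i} g'_kf_k \in \langle f_1,\dots,f_{i-1},f_{i+1},\dots,f_r\rangle,
\]
so $g'_i\in \sat{I}{h}$. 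Hence $h^{M_i}g'_i\in I$ for some $M_i$, and because $\sum_j c_{ij}f_j\in I$ this yields $h^{M_i+N}g_i\in I$, i.e.\ $g_i = 0$ in $(R/I)_h$. This gives injectivity of $\phi$, hence freeness, and \Cref{thm:conorm} concludes.

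The only subtlety worth flagging is that the hypothesis of the corollary is symmetric in the $f_i$'s (it considers $f_i$ modulo all the \emph{other} $f_j$, not only those of lower index), and this symmetric form is precisely what a single syzygy $\sum g'_if_i=0$ supplies when one isolates any chosen pivot $f_i$. This mirrors the unordered nature of the freeness condition on $(I/I^2)_h$ in \Cref{thm:conorm}, so I anticipate no real obstacle beyond careful bookkeeping of the powers of $h$ introduced by clearing denominators.
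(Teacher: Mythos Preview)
Your proposal is correct and follows essentially the same route as the paper: both arguments take a relation $\sum_i g_if_i\in I^2$, subtract off an $I^2$-expansion to obtain a genuine syzygy $\sum_i g'_if_i=0$ in $R$, isolate each pivot $f_i$ to invoke the hypothesis, and conclude that every $g_i$ vanishes in $(R/I)_h$. The only differences are cosmetic---you are more explicit about clearing denominators and tracking powers of $h$, whereas the paper works directly in $R_h$---and your closing appeal to \Cref{thm:conorm} is superfluous, since the corollary's stated conclusion is precisely the freeness you have already established.
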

\begin{proof}
  Suppose that in $R_h$ we have a relation
  \[\sum_i g_if_i = \sum_i\sum_{j\leq i}g_{ij}f_jf_i \]
  i.e. $\sum_ig_if_i= 0$ in $(I/I^2)_h$. Then we can write
  \[\sum_i\left( g_i - \sum_{j\leq i}g_{ij}f_j\right) f_i=0.\]
  By assumption we then have
  \[(g_i - \sum_{j\leq i}g_{ij}f_j)h^k\in I\] for every $i$ and some
  $k\in \mathbb{N}$. This implies $g_i\in I_h$ for every $i$ or in other words
  $g_i=0$ in $(R/I)_h$ for every $i$. This proves that $(I/I^2)_h$ is
  freely generated by the images of $f_1,\dots,f_r$ over $(R/I)_h$.
\end{proof}

Fix now a sequence $F:=(f_1,\dots,f_r)$. We will decompose $\Var(F)$
into equidimensional locally closed sets in the following sense:

\begin{definition}[Closure Partition]
  Let $X\subset \aff^n$ be a locally closed set. For another locally closed set $Y\subset \aff^n$ we write
  \[X \zeq Y\]
  if $\zclo{X} = \zclo{Y}$.
  A {\em closure partition}
  of $X$ is a finite set of pairwise disjoint locally closed sets $\dec$
  such that
  \[X \zeq \bigcup_{Y\in \dec}Y\] and such that in addition $\irred(\zclo{Y_1})\cap \irred(\zclo{Y_2})=\emptyset$ for $Y_1\neq Y_2\in \dec$ and
    \[\irred(\zclo{X}) = \bigcup_{Y\in \dec}\irred(\zclo{Y}).\]
\end{definition}

With this definition, we will compute an {\em irredundant Kalkbrener
  partition} of $\Var(F)$:

\begin{definition}[Irredundant Kalkbrener Partition]
  \label{def:kalkpart} An {\em irredundant Kalkbrener partition} of a
  locally closed set $X\subset \aff^n$ is a finite set of locally closed
  sets $\dec$ such that $\dec$ forms a closure partition of
  $X$ and such that each $Y\in \dec$ is
  equidimensional.
\end{definition}

\begin{remark}
  The name ``Kalkbrener partition'' is in recognition of the algorithm
  by \textcite{kalkbrener1993} which computes a similar decomposition.
\end{remark}

Let $X:=\Var(F)$. To compute an irredundant Kalkbrener Partition of
$X$, we will start by identifying a polynomial $g\in R$ with
$gf_i\in \idl{f_1,\dots,f_{i-1},f_{i+1},\dots,f_r}$ and $g\notin I$. This
means that $f_1,\dots,f_r$ is not a local regular sequence. We then
compute a closure partition of $X:=\Var(F)$ into two locally closed
sets $X_1$ and $X_2$ which will satisfy
\begin{align*}
  &\zclo{X_1} = \bigcup_{\substack{Y\in\irred(X)\\g\notin \Idl(Y)}}Y,\\
  &\zclo{X_2} = \bigcup_{\substack{Y\in\irred(X)\\g\in \Idl(Y)}}Y.
\end{align*}
By \Cref{prop:loccloideal}, for $X_1$, we may simply choose
\[X_1:=\Var(f_1,\dots,f_{i-1},f_{i+1},\dots,f_r)\setminus \Var(g).\]

We will exclusively work with locally closed sets of the shape
$Y\setminus \Var(h)$ with $h\in R$ and $Y$ algebraic. Thus we want to now
represent $X_2$ via a closure partition into locally closed sets of
this shape. To this end we define:

\begin{definition}[Hull]
  Let $X$ be a locally closed set and let $p\in R$. The {\em hull} of
  $X$ and $p$ is defined as the locally closed set
  \[\hll(X,p) := X\setminus \left[\zclo{X\setminus \Var(p)}\right].\]
\end{definition}
The hull of $X$ and $g$ gives us precisely those irreducible components
of $X$ on which $g$ vanishes:
\begin{proposition}
  \label{prop:hullcomps}
  Let $X$ be a locally closed set, let $p\in R$ and let $Y:= \hll(Y,p)$.
  Then
  \[\irred(\zclo{Y}) = \bigcup_{\substack{Z\in\irred(X)\\g\in \Idl(Z)}}Z.\]
\end{proposition}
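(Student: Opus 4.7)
The plan is to first unfold the definition. Setting $A := \zclo{X\setminus\Var(p)}$, we have $Y = X\setminus A$, so the task becomes computing the irreducible components of $\zclo{X\setminus A}$. The next step is to describe $A$ explicitly: by a straightforward extension of \Cref{prop:loccloideal} from algebraic to locally closed sets (obtained by writing $X = \Var(I)\setminus\Var(h)$ and applying the proposition to $\Var(I)\setminus\Var(hp)$), $A$ is precisely the union of those irreducible components of $\zclo{X}$ not contained in $\Var(p)$. Decomposing $\zclo{X} = Z_1\cup\cdots\cup Z_m$ and setting $S := \setdes{i}{p \in \Idl(Z_i)}$, this reads $A = \bigcup_{i\notin S} Z_i$.

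One inclusion is then essentially free: since each $Z_i$ with $i\notin S$ sits inside $A$, we get $Y \subseteq \zclo{X}\setminus A \subseteq \bigcup_{i\in S}Z_i$, and taking closures of this containment of $Y$ in a closed set yields $\zclo{Y}\subseteq \bigcup_{i\in S}Z_i$.

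The reverse inclusion is the main technical step, and I expect it to be the hardest part. Fixing $i\in S$, I would produce a subset of $Y$ that is dense in $Z_i$, obtained as the intersection of two non-empty open subsets of the irreducible space $Z_i$. The first is $X\cap Z_i$, which is open in $Z_i$ because $X$ is locally closed and $Z_i\subseteq\zclo{X}$, and is non-empty because otherwise $X$ would lie in the proper closed subset $\bigcup_{j\neq i}Z_j$ of $\zclo{X}$, contradicting the definition of the Zariski closure. The second is $Z_i\setminus A$, which is open and non-empty because $Z_i$ is a maximal irreducible subset of $\zclo{X}$, hence not contained in any distinct component and in particular not in $A$. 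By irreducibility of $Z_i$, these two non-empty open subsets meet in a non-empty, hence dense, open subset of $Z_i$ that is contained in $X\setminus A = Y$, yielding $Z_i \subseteq \zclo{Y}$.

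Combining the two inclusions gives $\zclo{Y} = \bigcup_{i\in S} Z_i$, and since the $Z_i$ for $i\in S$ are distinct maximal irreducible subsets of $\zclo{X}$, they are also the irreducible components of $\zclo{Y}$, matching the right-hand side claimed (reading the condition as $p\in\Idl(Z)$). The crux of the argument is thus the density statement above, which rests on the standard fact that every irreducible component of the closure of a locally closed set meets that set in a dense open subset.
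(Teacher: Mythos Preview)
Your argument is correct and follows essentially the same route as the paper: both identify $A=\zclo{X\setminus\Var(p)}$ via \Cref{prop:loccloideal} as the union of the components of $\zclo{X}$ on which $p$ does not vanish, and then conclude that $\zclo{X\setminus A}$ picks out exactly the remaining components. The only difference is one of detail: the paper obtains the second step by a tacit reapplication of the same principle, whereas you spell it out as an explicit density argument on each $Z_i$.
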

\begin{proof}
  By \Cref{prop:loccloideal} the irreducible components of
  $\zclo{X\setminus \Var(p)}$ consist of those irreducible components of
  $\zclo{X}$ on which $p$ is not identically zero. Then the
  irreducible components of $\hll(X,p)$ are precisely the complement
  of these irreducible components, i.e. the ones on which $p$ is
  identically zero.
\end{proof}

Suppose now that we have computed some finite generating set
$P\subset R$ for an ideal cutting out $\zclo{X_1}$. Choose $p\in P$ and let
$P':=P\setminus \{p\}$.  Then we compute
\begin{align*}
  X_2 := \hll(X,g) &= X\setminus \Var(P)\\
 & \zeq X\setminus \Var(p) \sqcup \hll(X\setminus \Var(P'), p).
\end{align*}
So we are decomposing
$X\setminus \Var(P)$ into the union of the components of $X$ where $p$ does
not vanish and the union of the components where $p$ vanishes but one
of the elements in $P'$ does not. This formula gives us now a
recursive algorithm to compute a closure partition of the chosen
$X_2$, which will terminate because $R$ is Noetherian.
\begin{example}
  To briefly illustrate this recursive strategy for computing
  $\hll(X,g)$ let $R:=\QQ[x,y,z]$ and $X = \Var(xy,xz,yz)$ and
  $g = y$. First note that $\sat{\idl{xy,xz,yz}}{g} =
  \idl{x,z}$. Therefore
  \begin{align*}
    \hll(X,g) &= X\setminus \Var(x,z)\\
              & \zeq X\setminus \Var(x) \sqcup \hll(X\setminus \Var(z), x)\\
              &= \Var(y,z)\setminus \Var(x) \sqcup \hll(\Var(x,y), x)\\
              &= \Var(y,z)\setminus \Var(x) \sqcup \Var(x,y).
  \end{align*}
\end{example}

Once this recursive algorithm terminates with a
list of locally closed sets $Y_1,\dots,Y_s$ of the form
\[Y_i = \Var(f_1,\dots,f_r)\setminus \Var(h_i)\] we will replace the original
sequence $f_1,\dots,f_r$ by $f_1,\dots,f_r,g$ in each of these locally
closed sets and recursively go through the same process with $X_1$ and
all the $Y_i$.

\begin{remark}
  Perhaps a more obvious idea than the algorithmic strategy lined out
  above is to, upon finding $g$ as above, split $X$ into
  \[X = \left[X\setminus \Var(g)\right] \sqcup \left[X\cap \Var(g)\right].\] Note that
  this is not a closure partition of $X$, we introduce components that
  $X$ did not have.

  For example, taking $X = \Var(xy,xz)\subset \aff^3$ as above, we have for
  $g:=y$
  \[g\cdot xz \in \idl{xy}\]
  and if we factor
  \[X = \Var(xy)\setminus \Var(x) \sqcup \Var(xy,xz,y)\] then the algebraic set
  $\Var(xy,xz,y) = \Var(y,xz)$ has the irreducible component
  $\Var(x,y)$ which $X$ did not have.

  One may accept this, because it yields a simpler algorithm, but we
  found experimentally that this behaves worse than decomposing $X$
  into a closure partition as above. The superfluous components
  introduced cause an exponential blow up and a significant slow down
  in the number of output locally closed sets produced on certain
  examples.
\end{remark}

\section{The Data Structure}
\label{sec:datstruc}

Again, let $F:=(f_1,\dots,f_r)\subset R$ be a finite sequence. To implement
the ideas from \Cref{sec:algdesc} into a concrete algorithm, we now
need a data structure to represent locally closed sets of the form
$X:=\Var(F)\setminus \Var(h)$ with $h\in R$. For this we will use Gröbner bases,
which we do not introduce here. We refer to \textcite{cox2015a} for an
introduction to the subject and to \textcite{becker1993} for a
comprehensive overview.

A Gröbner basis of a polynomial ideal is in particular a generating
set of the same ideal and given $F$ and $h$ one is able to compute a
Gröbner basis of $\sat{\idl{F}}{h}$ using the classical {\em
  Rabinowitsch trick} \parencite{rabinowitsch1930} together with the
{\em elimination theorem} \parencite[e.g. Theorem 2 in Chapter 3
of][]{cox2015a}, we denote this operation by $\satb(F,h)$. Note that
elimination methods are not strictly required to compute saturations,
we also mention \textcite{berthomieu2023} and our previous work
\textcite{eder2023}.

We represent such a locally closed set by
\begin{enumerate}
\item Storing the sequence $F$ and the polynomial $h$.
\item Storing a Gröbner basis $G$ of the ideal
  $I_X:=\sat{\idl{f_1,\dots,f_r}}{h}$.
\end{enumerate}
We denote a member of this data structure by $\Var(F,h,G)$ and call
it an {\em affine cell}.
\begin{remark}
  Recall that a Gröbner basis depends on choosing a {\em monomial order}
  on $R$, i.e. a certain total order on the set of monomials in $R$.
  For our purposes we may choose any monomial order. In practice
  we always choose the {\em degree-reverse-lexicographic} (DRL)
  order for which Gröbner bases tend to be easiest to compute
  using Buchberger's algorithm or a related algorithm, such as \Ffour.
\end{remark}

In the following, we now require the following basic operations for
affine cells:

\begin{algorithm}[H]
  \caption{Basic Operations for Affine Cells}
  \label{alg:basop}
  \begin{algorithmic}[1]
    \Require An affine cell $X:=\Var(F,h,G)$ a polynomial $g\in R$.
    \Ensure An affine cell representing the locally closed set $\Var(F)\setminus \Var(h) \cap \Var(g)$.
    \Procedure{AddEquation}{$X,g$}
      \State $G' \gets \satb(G\cup \set{g}, h)$
      \Return $\Var(F\cup \set{g}, h, G')$
    \EndProcedure

    \Require An affine cell $X:=\Var(F,h,G)$ a polynomial $g\in R$.
    \Ensure An affine cell representing the locally closed set $\Var(F)\setminus \Var(gh)$.
    \Procedure{AddInequation}{$X,g$}
      \State $G' \gets \satb(G, g)$
      \Return $\Var(F, gh, G')$
    \EndProcedure

    \Require An affine cell $X:=\Var(F,h,G)$
    \Ensure \texttt{true} if $X$ represents the empty set, \texttt{false} otherwise.
    \Procedure{IsEmpty}{$X$}
      \Return $1\in G$.
    \EndProcedure

    \Require An affine cell $X:=\Var(F,h,G)$
    \Ensure The codimension of the locally closed set defined by $X$.
    \Procedure{Codim}{$X$}
      \State $D\gets $ the dimension of $\idl{G}$, computed via Theorem 11 in Chapter 9 of \textcite{cox2015a}
      \State $n\gets $ the dimension of the ambient affine space of $X$
      \Return $n-D$
    \EndProcedure
  \end{algorithmic}
\end{algorithm}

In order to apply \Cref{prop:syz}, given an affine cell
$X:=\Var(F, h, G)$, we give ourselves in addition a black box
algorithm $\algname{GetSyz}(X)$ that returns, if it exists, a
polynomial $g\in R\setminus I_X$ and an integer $i$ with
$1\leq i\leq r$ s.t.
$gf_i\in \idl{f_1,\dots,f_{i-1},f_{i+1},\dots,f_r}$ or the tuple $0,0$
if no such $g$ exists. This boils down to two steps: First, we need to
compute a syzygy, i.e. polynomial relation, of the sequence
$F$. Recall that a syzygy of the sequence $F$ is a vector of
polynomials $(g_1,\dots,g_r)\in R^r$ s.t.
\[\sum_{i=1}^{\infty} g_if_i = 0.\]
As a second step, given such a syzygy, we then check the entries of
these syzygies for ideal membership in $I_X$ using $G$ and {\em normal
  form computations}, see e.g. Section 3 of Chapter 2 and Corollary 2
in Section 6 of Chapter 2 of \textcite{cox2015a}.

To compute these syzygies we may use either standard methods
\parencite[see e.g. Section 15.10.8 in][]{eisenbud1995} or {\em
  signature-based Gröbner basis techniques} following
e.g. \textcite{gao2010} or \textcite{sun2011}, which we use in
practice. The latter has two advantages:
\begin{enumerate}
\item Syzygies of $F$ are met naturally one by one {\em during} a
  Gröbner basis computation for $\idl{F}$, this means that once we
  meet such a syzygy we may halt our Gröbner basis computation of
  $\idl{F}$ and check the entries of the identified syzygy for
  membership in $I_X$.
\item When ran in a certain way, the syzygies compute by such a
  signature-basea algorithm are ``non-trivial'' in the sense that they
  do not lie in the submodule generated by the {\em Koszul syzygies}
  of $F$, where the latter are syzygies of the form
  $f_if_j - f_jf_i = 0$ for $i\neq j$, see e.g. Corollary 7.2 in
  \textcite{eder2017}.
\end{enumerate}

\section{The Algorithms}
\label{sec:algos}

We now present our algorithms for computing a Kalkbrener partition for
a given $\Var(F)$ with $F:=(f_1,\dots,f_r)\subset R$ a finite
sequence. Before presenting the main algorithm let us first present
our implementation of computing the hull of a locally closed set,
represented by an affine cell, and a polynomial in $R$:

\begin{algorithm}[H]
  \caption{Computing the Hull}
  \label{alg:hull}
  \begin{algorithmic}[1]
    \Require An affine cell $X$, an element $g\in R$
    \Ensure A closure partition of $\hll(X,g)$ 
    \Procedure{Hull}{$X,g$}
    \State $G\gets$ the underlying Gröbner basis of $X$
    \State $H \gets \satb(G,g)$
    \Return $\Call{Remove}{X, H}$
    \EndProcedure
  \end{algorithmic}
  \medskip
  \begin{algorithmic}[1]
    \Require An affine cell $X$, a finite set $P\subset R$
    \Ensure A closure partition of $X\setminus \Var(P)$
    \Procedure{Remove}{$X,P$}
    \State $p\gets $any element in $P$
    \State $P'\gets P\setminus \set{p}$
    \State $\dec \gets \Call{Remove}{X,P'}$
    \State $Y\gets \Call{AddInequation}{X, p}$
    \If{$I_X = I_Y$} \qquad \emph{checked via normal form computations}
    \Return $Y$
    \ElsIf{$\Call{IsEmpty}{Y}$}
    \Return $\dec$
    \Else
    \Return\label{line:crl2}$\set{Y}\cup \bigcup_{Z\in \dec}\Call{Remove}{Z,G_Y}$
    \EndIf
    \EndProcedure
  \end{algorithmic}
\end{algorithm}

\begin{theorem}
  \label{thm:hllremove}
  The procedures \algname{Hull} and \algname{Remove} terminate and
  are correct in that they satisfy their output specifications.
\end{theorem}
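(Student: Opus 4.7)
The plan is to get \algname{Hull} as a one-line consequence of the correctness of \algname{Remove}: by \Cref{prop:loccloideal} we have $\Var(\sat{I_X}{g}) = \zclo{X\setminus\Var(g)}$, and since \satb computes a generating set $H$ of $\sat{I_X}{g}$ we obtain $X\setminus\Var(H) = \hll(X,g)$, so a closure partition of the latter is exactly what \algname{Hull} returns. All the substance therefore lies in \algname{Remove}, where I also adopt the natural base convention that the empty case $P = \emptyset$ returns the empty partition of $X\setminus\Var(\emptyset) = \emptyset$.

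For \algname{Remove}, I plan to prove correctness and termination simultaneously by induction on the lexicographic measure $(\mu(X,P),|P|)$, where $\mu(X,P)$ counts the irreducible components of $\zclo{X}$ not contained in $\Var(P)$. Correctness splits along the three branches. In the $I_X = I_Y$ branch, $\zclo{X} = \zclo{Y}$ together with the chain $\zclo{X\setminus \Var(p)}\subseteq \zclo{X\setminus \Var(P)}\subseteq \zclo{X}$ forces $\zclo{X\setminus \Var(P)} = \zclo{Y}$, so $\set{Y}$ is trivially a closure partition. In the \algname{IsEmpty}$(Y)$ branch, $X\subseteq \Var(p)$ makes $\Var(p)$ redundant inside $\Var(P)$, so $X\setminus \Var(P) = X\setminus \Var(P')$ and the inductive $\dec$ is the correct output. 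In the else branch, I would use \Cref{prop:loccloideal} to identify components of $\zclo{Y}$ with components of $\zclo{X}$ not contained in $\Var(p)$, and components of the output of the recursive call on $(Z,G_Y)$ with components of $\zclo{Z}$ contained in $\Var(p)$, and verify that together these exhaust the components of $\zclo{X}$ not contained in $\Var(P)$. Pairwise disjointness follows from $Y\subseteq \zclo{Y}$, from each recursive output lying in $\zclo{Z}\setminus \zclo{Y}$, and from the pairwise disjointness of the $Z\in\dec$.

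The main obstacle is termination, because the recursive call on $(Z,G_Y)$ replaces $P$ by a potentially much larger set and an induction on $|P|$ alone fails. The key step is $\mu(Z,G_Y) < \mu(X,P)$ in the else branch. Components of $\zclo{Z}$ lie among the components of $\zclo{X}$ relevant to $P'$, and by the identification above the $G_Y$-relevant components of $\zclo{Z}$ are precisely those contained in $\Var(p)$, so $\mu(Z,G_Y)\leq \mu(X,P')$. If $\mu(X,P') < \mu(X,P)$ we are done; otherwise every component of $\zclo{X}$ in $\Var(P')$ already lies in $\Var(p)$, which forces every component of $\zclo{Y}$ to be $P'$-relevant. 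Non-emptiness of $Y$ in the else branch then produces such a component $C_0$, which the closure-partition property of $\dec$ assigns uniquely to some $Z_0\in\dec$. For $Z_0$, the presence of $C_0\not\subseteq \Var(p)$ strictly cuts the count of $\Var(p)$-components of $\zclo{Z_0}$ below $\mu(X,P')$; and for every other $Z\in\dec$ the absence of $C_0$ from $\zclo{Z}$ again gives fewer than $\mu(X,P')$ components. The recursive call on $(X,P')$ only shrinks $|P|$ with $\mu(X,P')\leq \mu(X,P)$, so $(\mu,|P|)$ strictly decreases in both recursion types, which is what we need.
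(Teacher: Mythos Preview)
Your proposal is correct. For \algname{Hull} and for the correctness of \algname{Remove} you argue essentially as the paper does, though more carefully: the paper only spells out the identity
\[
X\setminus\Var(P)\ \zeq\ \bigl(X\setminus\Var(p)\bigr)\ \sqcup\ \hll\!\bigl(X\setminus\Var(P'),\,p\bigr)
\]
underlying the else branch, while you also dispatch the $I_X=I_Y$ and \algname{IsEmpty} branches explicitly. Where you genuinely diverge is termination. The paper's argument is a two-line Noetherian appeal: the else branch only fires when $I_X\subsetneq I_Y$, hence ``by Noetherianity'' it fires finitely often, and the remaining recursion shrinks $|P|$. This is terse and leaves the reader to reconstruct which ascending chain is meant. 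Your approach replaces this by an explicit well-founded lexicographic measure $(\mu(X,P),|P|)$ with $\mu$ counting components of $\zclo{X}$ outside $\Var(P)$, and you verify directly that it drops at every recursive call, including the delicate case of $\text{\algname{Remove}}(Z,G_Y)$ in the else branch. The trade-off: the paper's argument is shorter and leans on the ambient Noetherianity of $R$, whereas yours is longer but entirely self-contained and makes the combinatorics of component-splitting transparent; in particular your case split on whether $\mu(X,P')<\mu(X,P)$ and the use of the distinguished component $C_0\in\irred(\zclo{Y})$ to break ties is a nice explicit mechanism that the paper's proof does not surface.
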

\begin{proof}
  We use the notation of \Cref{alg:hull} throughout.

  The correctness and termination of \algname{Hull} are, by definition
  of the hull operator, immediate once the correctness and termination
  of \algname{Remove} is established. The correctness of
  \algname{Remove} is proven once we can prove that
  \[X\setminus \Var(p) \sqcup \hll(X\setminus \Var(P'), p)\]
  gives a closure partition of $X\setminus \Var(P)$. 
  Denote $X_p:=X\setminus \Var(p)$ and $X_{P'}:=\hll(X\setminus \Var(P'), p)$. 
  For this, note that
  \[X_{P'} = \left[X\setminus \Var(P')\right]\setminus \zclo{X_p}.\] Hence we have
  $X\setminus \Var(P) \zeq X_p \sqcup X_{P'}$ and $X_p$ and $X_{P'}$ do not share
  any irreducible components by \Cref{prop:loccloideal} and
  \Cref{prop:hullcomps}.

  Now we just have to prove that \algname{Remove} terminates. Note
  that \Cref{line:crl2} is only called if $I_X\subsetneq I_Y$. Hence, by
  Noetherianity of $R$, during any run of \algname{Remove}, we can
  only call \Cref{line:crl2} finitely many times.  But then
  termination follows, since the input set $P$ is finite.
\end{proof}

Now we can give our main algorithm:

\begin{algorithm}[H]
  \caption{Computing an irredundant Kalkbrener partition}
  \label{alg:kalkdec}
  \begin{algorithmic}[1]
    \Require An affine cell $X$, an upper bound $c_X$ on the codimension of all irreducible components of $X$
    \Ensure An irredundant Kalkbrener partition of the locally closed set defined by $X$
    \Procedure{KalkPart}{$X$}
    \If{$\Call{Codim}{X} = c_X$}
    \Return \label{line:codimup} $\set{X}$
    \EndIf
    \State \label{line:foundsyz} $g,i\gets $\Call{GetSyz}{$X$}
    \If{$g=0$}
    \Return \label{line:lci} $\set{X}$
    \Else
    \State $X_1\gets \Call{AddInequation}{X, g}$
    \State Remove $f_i$ from the underlying sequence of $X_1$
    \State \label{line:r} $r\gets $number of elements in the underlying sequence of $X_1$
    \State \label{line:newcodimup} $c_{X_1} \gets \min\set{c_X,r}$
    \State $\mathcal{X}_2\gets $\Call{hull}{$X,g$}
    \For{$X_2$ in $\mathcal{X}_2$}
    \State \label{line:insertg} Insert $g$ into the underlying sequence of every $X_2\in \mathcal{X}_2$
    \EndFor
    \Return $\Call{KalkPart}{X_1,c_{X_1}}\cup \bigcup_{X_2\in \mathcal{X}_2}\Call{KalkPart}{X_2,c_X}$
    \EndIf
    \EndProcedure
  \end{algorithmic}
\end{algorithm}

Following from \Cref{thm:hllremove} we prove the correctness and termination
of \Cref{alg:kalkdec}:

\begin{corollary}
  The procedure \algname{KalkPart} terminates and is correct in that
  it satisfies its output specification.
\end{corollary}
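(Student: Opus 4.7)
My plan is to prove termination and correctness simultaneously by well-founded induction, using the lexicographic measure $\mu(X) := (I_X, |F_X|)$ in which strict ideal containment dominates and sequence length is the tie-breaker; this measure is well founded by Noetherianity of $R$ combined with $|F_X| \geq 0$. Both base cases resolve quickly. If $\algname{Codim}(X) = c_X$, then by hypothesis every irreducible component $Y$ of $\zclo{X}$ satisfies $\codim(Y) \leq c_X$, while $\algname{Codim}(X)$ computes the minimum codimension across components (since $\dim \idl{G}$ is the maximum component dimension); together these force all components to have codimension exactly $c_X$, so $X$ is equidimensional. If instead $\algname{GetSyz}$ returns $(0, 0)$, then no syzygy of $F$ has an entry outside $I_X = \sat{\idl{F}}{h}$, so \Cref{prop:syz} combined with \Cref{thm:conorm} makes $F$ a local regular sequence outside $\Var(h)$, and the preceding proposition yields equidimensionality of $\Var(F) \setminus \Var(h)$.

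For the recursive step, I would first verify that $\set{X_1} \cup \mathcal{X}_2$ is a closure partition of $X$. The identity $\Var(F) \setminus \Var(g) = \Var(F \setminus \set{f_i}) \setminus \Var(g)$, a direct consequence of $g f_i \in \idl{F \setminus \set{f_i}}$, shows that $X_1$, after the deletion on \Cref{line:r}, still represents precisely the components of $X$ on which $g$ does not vanish. \Cref{prop:hullcomps} identifies $\hll(X, g)$ with the components on which $g$ vanishes, and \Cref{thm:hllremove} makes $\mathcal{X}_2$ a closure partition of $\hll(X, g)$; the insertion of $g$ on \Cref{line:insertg} does not alter the underlying locally closed set since $g$ already lies in the radical of each underlying ideal. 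Applying the inductive hypothesis to each smaller-measure child then combines these Kalkbrener partitions into one for $X$, with component disjointness across branches inherited from the component disjointness of $X_1$ and $\hll(X, g)$.

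The main obstacle is checking that $\mu$ strictly decreases in every recursive call. For each $X_2 \in \mathcal{X}_2$, \Cref{prop:hullcomps} yields $\zclo{X_2} \subsetneq \zclo{X}$ and hence $I_{X_2} \supsetneq I_X$ (the radicals already differ), and this strict ascent survives the insertion of $g$. For $X_1$, two cases arise: if $g$ is a zero divisor modulo $I_X$, then $I_{X_1} = \sat{I_X}{g} \supsetneq I_X$ and the first coordinate of $\mu$ drops; otherwise $I_{X_1} = I_X$, but removing $f_i$ on \Cref{line:r} shrinks $|F_{X_1}|$ to $|F_X| - 1$, so the tie-breaker drops. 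In the worst case the recursion reaches $|F_X| = 1$, at which point any syzygy witness $g$ would satisfy $g f_1 = 0$ in the domain $R$ and hence be zero, so $\algname{GetSyz}$ must return $(0,0)$ and the base case fires; combined with Noetherianity this closes the induction.
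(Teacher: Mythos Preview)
Your approach—well-founded induction on the lexicographic measure $(I_X,|F_X|)$—is essentially the paper's argument repackaged: the paper also splits on whether $g$ vanishes on a component of $X$ (ideal strictly grows) versus not ($\mathcal{X}_2=\emptyset$ and the sequence shortens). The base cases and the closure-partition verification match.

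There is, however, one genuine omission in your correctness argument. When you invoke the inductive hypothesis on the call $\algname{KalkPart}(X_1,c_{X_1})$, you need the input specification to hold, i.e.\ that $c_{X_1}=\min\{c_X,r\}$ is an upper bound on the codimension of every irreducible component of $X_1$. You never check this. The bound by $c_X$ is clear (components of $X_1$ are among those of $X$), but the bound by $r$ requires Krull's principal ideal theorem applied to the shortened sequence $F\setminus\{f_i\}$; the paper invokes this explicitly. Without it, \Cref{line:codimup} could fire with $c_{X_1}=r<c_X$ and wrongly declare $X_1$ equidimensional.

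A smaller wrinkle in your termination argument: the claim ``$\zclo{X_2}\subsetneq\zclo{X}$ because the radicals already differ'' fails when $g\in\sqrt{I_X}\setminus I_X$, since then $\hll(X,g)\zeq X$ and some $X_2\in\mathcal{X}_2$ can have the same closure as $X$. The strict containment $I_{X_2}\supsetneq I_X$ you need is still true, but the correct reason is that after the insertion on \Cref{line:insertg} one has $g\in I_{X_2}$ while $g\notin I_X$ by the specification of $\algname{GetSyz}$. (The paper's proof is equally informal on this edge case.)
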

\begin{proof}
  To establish correctness, let us first look at
  \Cref{line:codimup}. By definition, $c_X$ is an upper bound on the
  maximum codimension of all components of $X$. So, if
  $\codim(X) = c_X$, then $X$ is equidimensional of codimension
  $c$. To fully establish the correctness of this line, we now have to
  show, that, in \Cref{line:newcodimup}, $c_{X_1}$ is really an upper
  bound on the maximum codimension of all components of $X_1$. The
  closures of the components of $X_1$ are a subset of the closures of
  the components of $X$, so the maximum codimension of all components
  of $X_1$ is certainly bounded by $c_X$. With $r$ defined as in
  \Cref{line:r}, the codimension of all components of $X_1$ is bounded
  by $r$ by Krull's principal ideal theorem \parencite[Theorem 10.2
  in][]{eisenbud1995}. This shows that $c_{X_1}$ is really an upper
  bound on the codimension of all components of $X_1$.

  Next, in line \Cref{line:lci}, the equidimensionality of $X$ follows
  from \Cref{thm:conorm} and \Cref{prop:syz}.

  Finally note that $\set{X_1}\cup \mathcal{X}_2$, as defined in the algorithm,
  form a closure partition of $X$ by \Cref{thm:hllremove}. This is
  not affected by \Cref{line:insertg} because we have $g\in \rad{I_{X_2}}$
  for all $X_2\in \mathcal{X}_2$. This finally establishes the correctness of the
  algorithm.

  For termination, note that $I_X$ is strictly contained in the ideal
  $I_{X_1}$ and the ideals $I_{X_2}$ for $X_2\in \mathcal{X}_2$ if $g$ vanishes on
  some components of $X_1$: In this case $X_1$ and every
  $X_2\in \mathcal{X}_2$ has as irreducible components a strict subset of those of
  $X$ by \Cref{prop:loccloideal} and \Cref{prop:hullcomps}. So by
  Noetherianity of $R$, after a finite number of steps, every $g$
  found in \Cref{line:foundsyz} vanishes on no irreducible component
  of $X$. In this case, $\mathcal{X}_2 = \emptyset$ and
  $I_X = I_{X_1}$ but the sequence underlying the affine cell $X_1$ is
  strictly shorter than the one underlying $X$. This establishes the
  termination of \algname{KalkPart}.
\end{proof}

\section{Benchmarks}
\label{sec:bench}

In this section we show and discuss several benchmarks for
\Cref{alg:kalkdec}, comparing it with the algorithm for
equidimensional decomposition given in \textcite{eder2023a}.

For this we wrote an implementation of \Cref{alg:kalkdec} using the
\julia-package \AS which itself gives an interface to the Gröbner
basis library \msolve, itself written in \texttt{C}. Our
implementation uses this interface to \msolve to compute the
underlying Gröbner bases of all appearing affine cells and thus to
perform all operations given in \Cref{alg:basop}. As outlined in
\Cref{sec:datstruc}, we use our own implementation of signature-based
Gröbner basis (sGB) algorithms in \julia to facilitate the necessary
syzygy computations for the \algname{GetSyz} operation.

In such a sGB algorithm, the syzygies of a finite sequence of
polynomials $F$ correspond to {\em reductions to zero} in running this
algorithm on $F$, i.e. in computing a Gröbner basis for $\idl{F}$ such
an algorithm reduces certain polynomials by the current intermediate
Gröbner basis for $\idl{F}$ and if this reduction yields zero it
corresponds to a syzygy. From this reduction to zero, we may then
reconstruct the corresponding syzygy using the algorithm called
``Representation'' on p.8 in \textcite{sun2011}. If $F$ underlies an
affine cell $\Var(F,h,G)$ we then check membership of the entries of
the thusly recovered syzygy in $\idl{G}$ using normal form
computations.

The source code of our implementation is available at

\begin{center}
  \url{https://github.com/RafaelDavidMohr/AlgebraicSolving.jl}.
\end{center}

In \Cref{table:equidim2} we compare this implementation of \Cref{alg:kalkdec}
with several other methods:
\begin{itemize}
\item \primedec: The algorithm in \textcite{eder2023a}.
\item \elim: The homological method by \textcite{eisenbud1992}.
\item \homolog: The elimination method detailed in \textcite{decker1999}.
\item \regchain: Kalkbrener's algorithm as part of the Regular chains
  library in \sftware{Maple} (see \cite{lemaire2005}).
\end{itemize}

All examples were ran in a random prime characteristic with less than
$32$ bits on a single thread of an Intel Xeon Gold 6354. For each
example, we gave each used implementation at least an hour or fifty
times the time that the fastest implementation took, whichever is
bigger. If this time limit was exceeded, we indicated it with
$\infty$ in \Cref{table:equidim2}.

In order to make the four methods using Gröbner bases
(i.e. \Cref{alg:kalkdec}, \primedec, \elim and \homolog) as comparable
as possible, independent of implementational considerations, we tried
to use \msolve as the core Gröbner basis engine of all of them. Our
implementation accompanying the algorithm in \textcite{eder2023a} is
already based on \msolve. In addition, we implemented \elim using \AS,
this implementation can be found on the authors website at

\begin{center}
  \url{https://polsys.lip6.fr/~mohr/elim_decomp.jl}.
\end{center}

For \homolog, we used the \julia-based computer algebra system
\sftware{Oscar} \parencite{OSCAR-book} but supplied at least the
initial Gröbner basis of the input ideal using \msolve.

The polynomial systems on which we compared these algorithms are comprised
as follows:

\begin{itemize}
\item C1 and C3 are certain jacobian ideals of single multivariate
  polynomials which define singular hypersurfaces.
\item Cyclic$(8)$, encoding the standard cyclic benchmark in computer
  algebra in $8$ variables.
\item ED($d,n$) encodes the parametric \emph{euclidean distance}
  problem for a hypersurface of degree $d$ in $n$ variables,
  see~\textcite{draisma2014}.
\item PS$(n)$ encodes the singular points of an algebraic set cut
  out by polynomials
  $$f_1,\dots,f_{n-1},g_1,\dots,g_{n-1}$$
  with $f_i\in \mathbb{K}[x_1, \ldots, x_{n-2}, z_1, z_2]$,
  $g_i\in \mathbb{K}[y_1, \ldots, y_{n-2}, z_1, z_2]$, the $f_i$ being chosen
  as random dense quadrics, and $g_i$ chosen such that
  $g_i(x_1,\dots,x_{n-2},z_1,z_2) = f$, i.e. as a copy of $f_i$ in the
  variables $y_1,\dots,y_{n-2},z_1,z_2$.
\item Sing$(n)$ encodes the critical points of the restriction of the
  projection on the first coordinate to a (generically singular)
  hypersurface which is defined by the resultant in $x_{n+1}$ of two
\item SOS$(s, n)$ encodes the critical points of the restriction of
  the projection on the first coordinate to a hypersurface which is a
  sum of $s$ random dense quadrics in $\mathbb{K}[x_1, \ldots, x_n]$,
  random dense quadrics $A, B$ in $\mathbb{K}[x_1, \ldots, x_{n+1}]$.
\item The Steiner polynomial system, coming from
  \textcite{breiding2020}.
\item The two remaining examples sys2353 and sys2161 are part of the
  BPAS library \parencite{asadi2021} for triangular decomposition.
\end{itemize}

Let us close with a few remarks on the timings recorded in
\Cref{table:equidim2}. On most examples given in \Cref{alg:kalkdec},
\Cref{alg:kalkdec} is the most well behaved with a few exceptions:

On Cyclic(8) the bottleneck for \Cref{alg:kalkdec} lies entirely in
the required syzygy computations. For this example a significant
number of syzygies are computed and then verified to already lie in
the ideal underlying the affine cell in question, they therefore
cannot be used by \Cref{alg:kalkdec} to further decompose the affine
cell.

Despite it usually bringing a performance boost, processing all
defining equations at the same time can also be a bottleneck: In
particular at the start of the computation, a Gröbner basis of the
ideal defined by the input equations is required. This slows down
\Cref{alg:kalkdec} compared to \primedec in those cases where
\primedec performs a decomposition after only a small handful of the
input equations have been considered, this is the case for the
examples sys2353 and sys2161. These examples have many underlying
variables but each defining equation involves only a subset of these
variables, we observed that on such examples \regchain is also very
well behaved and refer to \textcite{eder2023a} for further benchmarks
indicating this.

Finally, we want to note that the examples ED(3,4) and ED(3,5) are
already equidimensional but not given by local regular
sequences. Therefore \Cref{alg:kalkdec} performs several decomposition
steps, while the equidimensionality of these examples is quickly
verified by \elim, which in this case just requires a single Gröbner
basis computation.

\begin{table}[hbt!]
  \caption{Comparing \Cref{alg:kalkdec} with other Methods for Equidimensional Decomposition}
  \label{table:equidim2}
  \begin{center}
    \small
    \begin{tabular}{llllll}
      \toprule
      & \Cref{alg:kalkdec} & \primedec & \elim & \homolog & \regchain\\
      \midrule
      Cyclic(8) & $\infty$ & 381.2 & $\infty$ & $\infty$ & $\infty$\\
      C1 & 15.0 & 129.0 & $\infty$ & $\infty$ & $\infty$\\
      C3 & 0.8 & 10.0 & $\infty$ & $7.2$ & $\infty$\\
      PS(10) & 0.2 & 1.7 & 5.1 & 32.1 & $\infty$\\
      PS(12) & 6.3 & 51.0 & 187.6 & $\infty$ & $\infty$\\
      PS(14) & 248.2 & 3128.3 & $\infty$ & $\infty$ & $\infty$\\
      PS(16) & 13666.2 & $\infty$ & $\infty$ & $\infty$ & $\infty$\\
      Sing(10) & 3.8 & 1.0 & $\infty$ & $\infty$ & $\infty$\\
      SOS(6,4) & 6.0 & 5.0 & $\infty$ & $\infty$ & $\infty$\\
      SOS(6,5) & 24.9 & 14.0 & $\infty$ & $\infty$ & $\infty$\\
      SOS(7,4) & 14.2 & 24.4 & $\infty$ & $\infty$ & $\infty$\\
      SOS(7,5) & 112.2 & $\infty$ & $\infty$ & $\infty$ & $\infty$\\
      Steiner & 404.9 & 870.0 & $\infty$ & $\infty$ & $\infty$\\
      sys2353 & 10.9 & 1.6 & 803.2 & 2.4 & 4.4\\
      sys2161 & 26.2 & 7.54 & $\infty$ & $\infty$ & 26.5\\
      ED(3,4) & 30.7 & 294.1 & 13.0 & $\infty$ & $\infty$\\
      ED(3,5) & 828.1 & $\infty$ & 283.5 & $\infty$ & $\infty$\\
      \bottomrule
    \end{tabular}
  \end{center}
\end{table}

\newpage
\printbibliography
\end{document}